\begin{document}
\firstpage{1}

\subtitle{Subject Section}

\title[FMtree]{FMtree: A fast locating algorithm of FM-indexes for genomic data}
\author[Cheng \textit{et~al}.]{Haoyu Cheng\,$^{\text{\sfb 1,2,3}}$, Ming Wu\,$^{\text{\sfb 1,2,3}}$ and Yun Xu\,$^{\text{\sfb 1,2,3,}*}$}
\address{$^{\text{\sf 1}}$School of Computer Science, University of Science and Technology of China, Heifei, Anhui, 230027, China\\
$^{\text{\sf 2}}$Key Laboratory on High Performance Computing, Anhui Province\\
$^{\text{\sf 2}}$Collaborative Innovation Center of High Performance Computing, National University of Defense Technology, Changsha, 410073, China}

\corresp{$^\ast$To whom correspondence should be addressed.}

\history{Received on XXXXX; revised on XXXXX; accepted on XXXXX}

\editor{Associate Editor: XXXXXXX}

\abstract{\textbf{Motivation:} 
As a fundamental task in bioinformatics, searching for massive short patterns over a long text has been accelerated by various compressed full-text indexes. These indexes are able to provide similar searching functionalities to classical indexes, e.g., suffix trees and suffix arrays, while requiring less space. For genomic data, a well-known family of compressed full-text index, called FM-indexes, presents unmatched performance in practice. One major drawback of FM-indexes is that their locating operations, which report all occurrence positions of patterns in a given text, are particularly slow, especially for the patterns with many occurrences.\\
\textbf{Results:} 
In this paper, we introduce a novel locating algorithm, FMtree, to fast retrieve all occurrence positions of any pattern via FM-indexes. When searching for a pattern over a given text, FMtree organizes the search space of the locating operation into a conceptual quadtree. As a result, multiple occurrence positions of this pattern can be retrieved simultaneously by traversing the quadtree. Compared with the existing locating algorithms, our tree-based algorithm reduces large numbers of redundant operations and presents better data locality. Experimental results show that FMtree is one order of magnitude faster than the state-of-the-art algorithms, and still memory-efficient.\\
\textbf{Availability:} FMtree is freely available at \url{https://github.com/chhylp123/FMtree}.\\
\textbf{Contact:} \href{xuyun@ustc.edu.cn}{xuyun@ustc.edu.cn}\\
\textbf{Supplementary information:} Supplementary data are available online.}

\maketitle

\section{Introduction}


The string matching problem is to identify the occurrence positions of a short string $P$ (called \emph{pattern}) in a given long string $T$ (called \emph{text}). For genomic data, both the pattern $P$ and the text $T$ are the sequences over a small alphabet $\Sigma = \{a, c, g, t\}$. In order to speed up the string matching process, a well-known approach is \emph{indexed matching}.
It first builds an index data structure for the text in advance, and then searches for the pattern via the index. 
If the text is static, this approach is usually much faster than \emph{online matching} approach, which directly searches for the pattern over the text.
Many bioinformatics applications have adopted the indexed string matching approach, such as read mapping (\citealp{bowtie2}; \citealp{bwa}), genome assembly (\citealp{assembly1}; \citealp{assembly2}) and read error correction (\citealp{correction}).


A number of full-text indexes have been proposed for several decades. 
Classical indexes like suffix arrays (\citealp{suffix_array}),
efficiently support two basic functions: \emph{count} and \emph{locate}. Given a pattern $P$ and a text $T$, the count function is to report the number of occurrences of $P$ in $T$, while the locate function is to retrieve all occurrence positions of $P$ in $T$. A serious problem of these classical indexes is that their space usage is relatively large, especially for a very long text. To address this problem, various compressed full-text indexes have been developed in recent years (\citealp{review2009}).
Generally, most of them are able to be classified into three families: FM-indexes (\citealp{fm-index, fm2}), compressed suffix arrays (CSAs) (\citealp{csa1, csa2}) and Lempel-Ziv compression based indexes (LZ-indexes) (\citealp{lz1, lz2}).
 These indexes are designed to provide similar count and locate functionalities to classical indexes, while requiring less space. For genomic data, the most efficient family of compressed full-text indexes is FM-indexes (\citealp{advantage1, advantage2}), especially when searching for short patterns (\citealp{advantage3}). 
 Given a human genome with about 3.15 billion characters, FM-indexes usually require less than 3GB RAM, while classical suffix arrays require about 12GB RAM. In addition, thanks to the small alphabet size of genomic data, the counting time of FM-indexes is comparable to that of classical indexes (\citealp{drawback, Experience}). Thus, FM-indexes have become the essential data structure in many bioinformatics algorithms (\citealp{bowtie2, bwa, gem}).


The major bottleneck of FM-indexes is that their locating operation is several orders of magnitude slower than that of classical indexes (\citealp{review2009}). To reduce space usage, FM-indexes only save a small fraction of text positions, called \emph{sampled positions}, rather than all of them. When locating a pattern via FM-indexes, the sampled positions of the pattern can be directly retrieved, while the non-sampled positions have to be calculated one-by-one exploiting the expensive LF-mapping operations (see details in Section 2.2). For short patterns with many occurrence positions, FM-indexes need to perform large numbers of LF-mapping operations until all non-sampled positions have been obtained. Unfortunately, searching for short patterns in a long text is an important task in bioinformatics (\citealp{bowtie2, hobbes, BitMapper, mrfast, rHAT}), and these short patterns are very frequent in practice (\citealp{frequence}). In this case, the cost of locating operations dominates the overall string matching time.

Unfortunately, although there are various studies about compressed full-text indexes, only a small fraction of them focus on accelerating the locating operations.
Gonz{\'a}lez {\it et~al}. propose locally compressed suffix array (LCSA) (\citealp{lcsa2, lcsa1}) to improve the data locality of the locating operations. For popular compressed full-text indexes such as FM-indexes and CSAs, a serious problem is that locating patterns via them results in many random memory accesses. To solve this problem, LCSA directly compresses suffix array exploiting the repetitions of suffix array. Since all occurrence positions of a pattern are saved consecutively in suffix array, LCSA can obtain these positions by decompressing consecutive elements of suffix array. In this case, its memory accesses are highly local. 
However, compared with other compressed full-text indexes, LCSA requires much more space when indexing genomic data.

Besides, Ferragina {\it et~al}. develop a distribution-aware algorithm (\citealp{aware}), which adjusts FM-indexes or CSAs according to the distribution of the occurrence positions of query patterns. This algorithm assumes that the distribution of the patterns' occurrence positions has been known in advance, so that it inclines to sample the text positions which have high probability to be located. However, in most cases, it is impossible to know this distribution during the index building phase. Besides, the distribution-aware algorithm cannot achieve good performance unless the distribution of the patterns' occurrence positions is very skewed. This requirement also limits the usage of this algorithm. 

In theory, LZ-indexes are more efficient than FM-indexes and CSAs when performing locating operations (\citealp{review2009}).
But for genomic data, it is difficult to develop a highly optimized LZ-index like the existing sophisticated implementations of FM-indexes in many bioinformatics algorithms (\citealp{bowtie2, bwa, gem}).
Moreover, there does not exist a practical implementation of LZ-indexes which is able to process large texts with billions of characters.

Here we introduce a novel locating algorithm, FMtree, to significantly accelerate the locating operations of FM-indexes for genomic data. When locating a pattern via FM-indexes, the search space of locating operation is organized into a quadtree. By utilizing this quadtree, FMtree is able to calculate the non-sampled positions block-by-block, while current algorithms have to calculate these positions one-by-one. Thus, our tree-based locating algorithm is cache-friendly and avoids many unnecessary operations.
Another advantage of FMtree is that it can be applied to any implementation of FM-indexes without modification. 
Overall experimental results show that FMtree significantly outperforms previous algorithms for genomic data.

\begin{methods}
\section{Background}

\subsection{Definitions and notation}

A string $S$ is a sequence of characters over the alphabet $\Sigma$, and the size of $\Sigma$ is |$\Sigma$|.
We let $|S|$ denote the length of $S$, $S[i]$ denote the $i$-th character of $S$ 
($0 \le i \le |S| - 1$), and $S[i, j]$ denote the substring that starts at $S[i]$ and ends at $S[j]$. 
Besides, consider a character $s$ and a string $S$, $sS$ denotes the concatenation of $s$ and $S$.
We also let $s^n$ denote a string of length $n$ such that $s^0$ is an empty string, $s^1 = s$ and $s^n = ss^{n-1}$.




To solve the string matching problem, existing compressed full-text indexes need to support the following two basic operations:

\begin{itemize}
\item \textbf{count(\emph{P},\emph{T})}: Return the number of occurrences of pattern $P$ in text $T$. 

\item \textbf{locate(\emph{P},\emph{T})}: Return all occurrence positions of pattern $P$ in text $T$. 

\end{itemize}

For the convenience of further discussion, we assume that a special character \$ is at the end of text $T$, 
where \$ is lexicographically smaller than other characters in $\Sigma$.
Since we focus on genomic data, all characters in text $T$ and pattern $P$ belong to $\Sigma = \{a,c,g,t\}$ except $T[|T| - 1] = \$$. 

\subsection{Overview of FM-indexes}

As a family of compressed full-text indexes, FM-indexes are first proposed to emulate classical suffix arrays (\citealp{fm-index}). Given a text $T$, its suffix array $SA$ (\citealp{suffix_array}) saves the positions of all suffixes of $T$ in lexicographic order.
For a pattern $P$, it is obvious that the positions of all suffixes prefixed by $P$ are saved consecutively in an interval of $SA$, called $SA[sp, ep]$. In fact, $SA[sp, ep]$ consists of all occurrence positions of $P$ in $T$. 

\begin{figure}[b]
\centering
\includegraphics[width=0.27\textwidth]{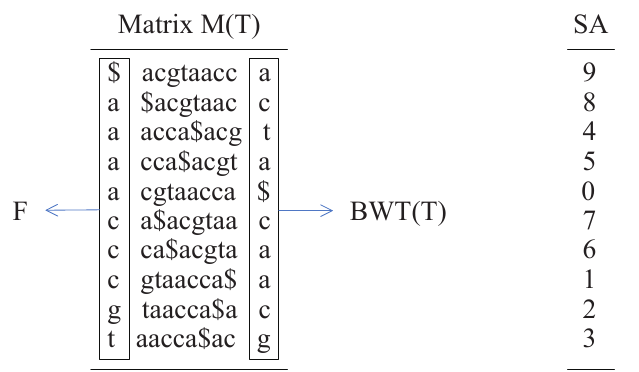}
\caption{An example of Burrows Wheeler Transform $BWT(T)$ and suffix array $SA$ for string $T=$``acgtaacca\$". }
\label{fig1}
\end{figure}

Compared with suffix arrays, 
FM-indexes provide similar searching functionalities, while requiring less space. The critical data structure of a FM-index is Burrows Wheeler Transform (\citealp{BWT}) of $T$, called $BWT(T)$, which permutes the characters of $T$ reversibly. Conceptually, $BWT(T)$ can be constructed in the following two steps:

\begin{itemize}
\item Building a conceptual matrix $M(T)$ including all cyclic rotations of $T$ in lexicographic order. Each row in $M(T)$ is a cyclic rotation of $T$.

\item Let $BWT(T)$ be the last column of $M(T)$.


\end{itemize}

An example is presented in Fig.~\ref{fig1}. 
For FM-indexes, there are two important properties of $M(T)$ and $BWT(T)$:

\begin{itemize}
\item Let $F$ be the first column of $M(T)$, $BWT(T)[i]$ precedes $F[i]$ in text $T$. It is obvious that the $i$-th row of $M(T)$ corresponds to $SA[i]$. Specifically, $BWT(T)[i] = T[SA[i] - 1]$ if $SA[i] \neq 0$, and $BWT(T)[i] = \$$ if $SA[i] = 0$. In addition, $F[i]  = T[SA[i]]$.


\item For any character $s \in \Sigma$, the $i$-th $s$ in $F$ and the $i$-th $s$ in $BWT(T)$ correspond to the same $s$ in $T$. For example in Fig.~\ref{fig1}, 
     $F[5]$ and $BWT(T)[1]$ are the first $c$ in $F$ and $BWT(T)$, respectively. In fact, both of them correspond to $T[7]$.

\begin{algorithm} [b]
        \caption{backward\_search($P$)}
        \begin{algorithmic}[1]
            \REQUIRE the pattern $P[0, |P| - 1]$
            \ENSURE the $SA$ range $[sp, ep]$ of $P$
            \STATE $i = |P| - 1, s = P[i], i=i-1$
            \STATE $sp = C[s], ep = C[s + 1]-1$
            \WHILE{$i \geqslant 0 \mathrel{\&}\mathrel{\&} sp \leqslant ep$}
			\STATE $s = P[i], i=i-1$
			\STATE $sp = C[s] + rank_{s}(BWT(T), sp)$
			\STATE $ep = C[s] + rank_{s}(BWT(T), ep+1)-1$
			\ENDWHILE
			\IF{$sp > ep$}
			\RETURN "not found"
			\ELSE
			\RETURN $[sp, ep]$
			\ENDIF
        \end{algorithmic}
    \end{algorithm}






\end{itemize}

Based on the above properties, a core operation LF (Last-to-First mapping or LF-mapping) of FM-indexes is defined as:


\begin{equation}
LF(l) = C[BWT(T)[l]] + rank_{BWT(T)[l]}(BWT(T), l)
\end{equation}

\noindent where $C[BWT(T)[l]]$ denotes the number of characters in $T$ which are 
smaller than $BWT(T)[l]$, and $rank_{BWT(T)[l]}$$(BWT(T), l)$ is a rank operation that reports the number of $BWT(T)[l]$ in $BWT(T)[0, l- 1]$. 
Essentially, LF operation scans text $T$ backward, namely, $F[l]=T[SA[l]]$, $F[LF[l]]=BWT(T)[l]=T[SA[l]-1]$ and $SA[LF[l]]$ $=SA[l]- 1$. For instance, in Fig.~\ref{fig1}, 
$SA[0]=9$, $LF[0]=1$, $SA[LF[0]] =SA[1]=SA[0]- 1=8$.

\vspace{1\baselineskip}
\noindent\textbf{Counting via FM-indexes.} 
When counting a pattern $P$ in text $T$, it is obvious that all $M(T)$'s rows prefixed by $P$ are saved consecutively in an interval $M(T)[sp, ep]$. Thus, to answer count$(P, T)$, FM-indexes first perform the \emph{backward search} algorithm to determine the range $[sp, ep]$.
After that, count$(P, T) = ep - sp + 1$. Algorithm 1 presents the procedure of the backward search algorithm. More precisely, this algorithm searches $P[0, |P| - 1]$ backward in $|P|$ steps. In $i$-th step ($i = |P| - 1, |P| - 2,\ldots, 0$), this algorithm updates $[sp, ep]$ such that $M(T)[sp, ep]$ includes all $M(T)$'s rows prefixed by $P[i, |P| - 1]$. Note that in first step ($i = |P| - 1$), $[sp, ep]$ is the range of a single character $P[|P| - 1]$, so that it can be directly obtained via array $C$ (line 2 in Algorithm 1).
For any character $s \in \Sigma$, $C[s]$ saves the number of characters which are lexicographically smaller than $s$ in $T$.
Finally, when $i = 0$, the range $[sp, ep]$ of $P[0, |P| - 1]$ is obtained and count$(P, T) = ep - sp + 1$. 
Since $M(T)[i]$ corresponds to $SA[i]$, $SA[sp, ep]$ actually saves all occurrence positions of $P$ in $T$.

\vspace{1\baselineskip}
\noindent\textbf{Locating via FM-indexes.} As mentioned above, the backward search algorithm of FM-indexes determines the range $[sp, ep]$ such that $SA[sp, ep]$ consists of all occurrence positions of $P$ in $T$. If the whole $SA$ is saved, locate$(P, T)$ can be answered by retrieving $SA[sp, ep]$ directly. However, for a long text, its $SA$ is very space-consuming. Thus, FM-indexes only save a fraction of positions in $SA$, called the sampled positions. This strategy reduces the space usage of FM-indexes, but of course comes at the expense of additional computational overhead.
Algorithm 2 presents the most practical locating algorithm of FM-indexes.
To obtain position $SA[i]$ ($sp \le i \le ep$), this algorithm first scans text $T$ backward by performing LF operation $m$ times until a sampled position $SA[j]$ is reached. After that, $SA[i] = SA[j] + m$.

\begin{algorithm} [t]
        \caption{locate($sp, ep$)}
        \begin{algorithmic}[1]
            \REQUIRE the $SA$ range $[sp, ep]$
            \ENSURE the position set $R$ consists of all positions in $SA[sp, ep]$

			\FOR{$i = sp$ to $ep$}
			\STATE $j = i, m = 0$

            \WHILE{$SA[j]$ is not sampled}
			\STATE $j = LF(j)$, $m=m+1$
			\ENDWHILE
			\STATE Add $SA[j]+m$ to $R$
			\ENDFOR
			
			\RETURN $R$
        \end{algorithmic}
    \end{algorithm}








\vspace{1\baselineskip}
\noindent\textbf{Implementations of FM-indexes.} As we can see, both the counting operation and the locating operation of FM-indexes can be reduced to rank operations. To support rank operations, practical implementations of FM-indexes break $BWT(T)$ into small blocks. For the beginning line of each block and each character $s \in \Sigma$, the rank value is precomputed and saved.
When calculating $rank_{s}(BWT(T),l)$, these implementations first retrieve the precomputed rank value of the block which includes $BWT(T)[l]$, 
and then add the number of rest $s$ in this block before $l$-th line. Besides, the locating algorithm of FM-indexes requires a sampled suffix array $SSA$, which saves all sampled positions in suffix array order.
Several implementations of FM-indexes also need a bitmap $B$, where $B[i] = 1$ denotes that $SA[i]$ is saved in $SSA$ ($0 \le i \le |T| - 1$). If $B[i] = 1$, $SA[i]$ can be found in $SSA[rank_{1}(B, i)]$.
We refer to the surveys (\citealp{review2009, review2007}) for more information about the implementations of FM-indexes.

\section{Methods}

\subsection{Analysis of Existing Locating Algorithm}

The first performance bottleneck is that existing locating algorithm needs to perform a large number of LF operations. 
As mentioned in Section 2.2, FM-indexes sample the positions in $SA$ to reduce the space usage. A popular sampling strategy is to sample every $SA[i]$ if $SA[i]$ mod $ D = 0$, where $D$ is the regular sampling distance. Here we refer to this strategy as \emph{value sampling} strategy.
This strategy guarantees that any occurrence position of a pattern can be obtained in at most $D - 1$ steps of LF operation.
Thus, to locate a pattern with $occ$ occurrence positions, the number of LF operations is $(D - 1) \times occ$ in worst case. Unfortunately, short and frequent patterns with large value of $occ$ are widely used in practice, which results in massive LF operations.
Apart from value sampling strategy, many FM-index-based bioinformatics algorithms 
adopt another sampling strategy, called \emph{subscript sampling} strategy. This strategy samples every $SA[i]$ if $i$ mod $D = 0$, where $D$ is the regular sampling distance. Compared with value sampling strategy, subscript sampling strategy leads to even worse locating performance. The reason is that for any position, value sampling strategy can guarantee to obtain it in at most $D - 1$ steps of LF operation, while subscript sampling strategy cannot.

Besides, the poor data locality is another bottleneck of existing locating algorithm. To locate a position $SA[i_{0}]$, existing locating algorithm needs to perform $m$ steps of LF operation until a sampled position $SA[i_{m}]$ is reached.
Essentially, this procedure scans text $T$ backward from $SA[i_{0}]$ to $SA[i_{m}]$, so that $SA[i_{m}] = SA[i_{0}] - m$.
In $t$-th step ($t = 1, 2,\ldots, m$), the aim of existing locating algorithm is to calculate $i_{t}$ such that $SA[i_{t}] = SA[i_{0}] - t = SA[i_{t - 1}] - 1$.
As shown in Algorithm 2, $i_{t}$ is obtained by calculating $LF(i_{t - 1})$, which needs to access $BWT(T)[i_{t - 1}]$.
In addition, if a FM-index is sampled by value sampling strategy, in $t$-th step, we need to access $B[i_{t-1}]$ to check if $SA[i_{t-1}]$ is a sampled position.
Thus, the memory access addresses to $BWT(T)$ and $B$ in $t$-th step are determined by $i_{t - 1}$. Similarly, in $(t + 1)$-th step, the memory access addresses are determined by $i_{t}$.
If the suffix $T[SA[i_{t - 1}], |T|-1]$ is not lexicographically similar to the suffix $T[SA[i_{t}], |T|-1]$, $i_{t - 1}$ and $i_{t}$ would be very different.
In this case, the memory accesses of these two steps are non-contiguous.
For example, consider a non-sampled position $SA[1003]$ of pattern $P = $ ``acgagt'' and $BWT(T)[1003] = c$.
To obtain $SA[1003]$, existing locating algorithm checks $B[1003]$ and calculates $LF(1003)$ in first step. Since $BWT(T)[1003] = T[SA[1003] - 1] = c$ and $SA[LF(1003)] = SA[1003] - 1$, $SA[LF(1003)]$ is actually an occurrence position of $cP =$ ``\underline{c}acgagt''. Obviously, ``\underline{c}acgagt'' is significantly lexicographically larger than ``acgagt'', so that $LF(1003)$ is very different to 1003. In second step, existing locating algorithm of FM-indexes needs to check $B[LF(1003)]$ and calculate $LF(LF(1003))$. Therefore, the memory access addresses in first step and second step to $BWT(T)$ and $B$ are non-contiguous.

\subsection{Our Proposed Algorithm: FMtree}

The key idea of FMtree is to organize the search space of the locating operation into a conceptual quadtree, so that multiple locations can be located simultaneously by traversing the quadtree. This idea is based on the observation that, different occurrence positions of a pattern $P$ may be obtained by performing similar LF operations.
For example, consider two non-sampled positions $SA[2136]$ and $SA[2137]$ of pattern $P =$ ``acgagt'', and $BWT(T)[2136]=BWT(T)[2137]=t$. In first step, existing locating algorithm calculates $LF(2136)$ for $SA[2136]$, and calculates $LF(2137)$ for $SA[2137]$. Since $BWT(T)[2136] = BWT(T)[2137] = t$, both $SA[LF(2136)]$ and $SA[LF(2137)]$ are the occurrence positions of $tP =$ ``\underline{t}acgagt''. As a result, $LF(2136)$ and $LF(2137)$ are very similar. In second step, for $SA[2136]$ and $SA[2137]$, existing locating algorithm calculates $LF(LF(2136))$ and $LF(LF(2137))$, respectively. Thus, if we locate $SA[2136]$ and $SA[2137]$ jointly, namely, we calculate $LF(2136)$ and $LF(2137)$ together in first step, and calculate $LF(LF(2136))$ and $LF(LF(2137))$ together in second step, the memory accesses in each step to $BWT(T)$ and $B$ would be contiguous.

More precisely, given two positions $SA[i]$ and $SA[j]$ with $BWT(T)[i] = BWT(T)[j] = s$, if there does not exist $s$ in $BWT(T)[i + 1, j - 1]$, it is obvious that $LF(i) = LF(j) - 1$.
Thus, for the positions in $SA[sp, ep]$ with same character $s$ in $BWT(T)$, the results of one step of LF operation belong to the range $[LF(sp_s), LF(ep_s)]$, where $SA[sp_s]$ and $SA[ep_s]$ are the first position and last position in $SA[sp, ep]$ with character $s$ in $BWT(T)$. If $SA[sp, ep]$ consists of all occurrence positions of pattern $P$, $SA[LF(sp_s), LF(ep_s)]$ actually includes all occurrence positions of $sP$.
Therefore, $[LF(sp_s), LF(ep_s)]$ = backward\_search$(sP) = [C[s] + rank_{s}(BWT(T), sp), C[s] + rank_{s}(BWT(T), ep+1)-1]$, as shown in Algorithm 1. 
In other words, to obtain all positions in $SA[sp, ep]$, large numbers of LF operations for all positions can be reduced to a few rank operations only for $sp$ and $ep$. Formally, this is based on the following theorem:

\newtheorem{theorem}{Theorem}
\begin{theorem}
Given a text $T$ over alphabet $\Sigma$ and its FM-index which is sampled by value sampling strategy with sampling distance $D$. Let $FM(P, T)$ be the sampled position set including all sampled occurrence positions of $P$ in FM-index, and $L(P, T)$ be the position set including all occurrence positions of $P$ in $T$. Then $L(P, T)$ can be calculated as follows:


\begin{equation}
L(P, T)=\bigcup_{i = 0}^{D - 1} \{ x| x=y+i,y \in FM(*^iP, T)\}
\end{equation}

\noindent where * is a wildcard of $\Sigma$.

\end{theorem}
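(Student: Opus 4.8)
The plan is to prove the two set inclusions $L(P,T) \subseteq \bigcup_{i=0}^{D-1}\{x \mid x = y+i,\ y \in FM(*^iP,T)\}$ and the reverse, using the fundamental fact established earlier: for a character $s$, the LF-mapping satisfies $SA[LF(l)] = SA[l] - 1$ and $BWT(T)[l] = T[SA[l]-1]$, together with the observation that if $SA[sp,ep]$ is exactly the set of occurrence positions of a pattern $Q$, then backward-searching one more character $s$ yields the interval $SA[LF(sp_s), LF(ep_s)]$ which is exactly the set of occurrence positions of $sQ$. Iterating this, if $SA[sp_0,ep_0]$ is the range for $P$, then for any string $w = s_{i-1}s_{i-2}\cdots s_0 \in \Sigma^i$ the corresponding backward-search interval is exactly the occurrence positions of $wP$, and the union of these intervals over all $w \in \Sigma^i$ is exactly the occurrence positions of $*^iP$ (matching any $i$ characters followed by $P$).

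First I would set up the key bijection. Fix $p \in L(P,T)$, i.e. $P$ occurs in $T$ starting at position $p$. Because the FM-index is sampled with value sampling distance $D$, there is a unique $i \in \{0,1,\dots,D-1\}$ such that $p - i$ is a sampled position (namely $i \equiv p \pmod D$ chosen so $0 \le i < D$ and $p-i \ge 0$; the boundary at the start of the text needs a tiny argument but is handled because position $0$ is always sampled). Now I claim $p - i \in \text{(set of occurrence positions of } *^iP)$: indeed, $T[p-i,\ p-i+i-1]$ is some string of length $i$ over $\Sigma$, and $T[p,\ p+|P|-1] = P$, so $T[p-i,\ p-i+i+|P|-1]$ matches the pattern $*^iP$. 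Since $p-i$ is sampled, $p - i \in FM(*^iP,T)$. Hence $p = (p-i) + i$ with $p - i \in FM(*^iP, T)$, giving $p \in \{x \mid x = y+i,\ y\in FM(*^iP,T)\}$. This proves $L(P,T) \subseteq$ RHS.

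Next I would prove the reverse inclusion. Take any $i \in \{0,\dots,D-1\}$ and any $y \in FM(*^iP,T)$; then $y$ is a sampled position and $*^iP$ occurs at $y$, meaning $T[y,\ y+i-1]$ is an arbitrary length-$i$ string and $T[y+i,\ y+i+|P|-1] = P$. Therefore $P$ occurs at position $y + i$, so $y + i \in L(P,T)$. This shows RHS $\subseteq L(P,T)$, and combined with the first part establishes the equality. I would also remark, to connect with the running discussion, why $FM(*^iP,T)$ is computable inside the index: by the backward-search argument above, $FM(*^iP,T)$ is the union over $s \in \Sigma$ of $FM(s\cdot(*^{i-1}P),T)$, so it is obtained by fanning out the backward search one character at a time over all four branches — this is precisely the branching that makes the search space a quadtree — and intersecting each resulting $SA$ interval with the sampled positions via the bitmap $B$.

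The main obstacle, and the only place real care is needed, is the boundary behavior near the start of the text: when $p < D$, the residue class argument must be adjusted so that we pick the sampled position at or before $p$, and one must check that $*^iP$ with $i \le p$ is still well-defined as a substring starting at a nonnegative index; this is exactly why the special terminator \$ and the convention that $SA$ position $0$ is sampled matter. Everything else is a routine unwinding of the definitions of $LF$, suffix arrays, and the value-sampling rule, so I would keep those parts terse and spend the written proof mostly on making the index-$i$ decomposition $p = (p-i)+i$ unambiguous.
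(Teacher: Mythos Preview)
Your proposal is correct and follows essentially the same approach as the paper: both partition the occurrence positions of $P$ by their residue modulo $D$ and identify the class $i$ with shifted sampled occurrences of $*^iP$. Your version is simply more explicit, proving both inclusions and addressing the boundary near position $0$, whereas the paper states the key identity $Lg_i(P,T)=\{x\mid x=y+i,\ y\in FM(*^iP,T)\}$ without further justification.
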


\begin{proof}
Generally, all positions in $T$ are classified into $D$ sets: $g_{0}, g_{1}, \ldots, g_{(D-1)}$. Each set $g_{i}$ consists of every position $SA[j]$ of $T$, where $SA[j]$ mod $D = i$. Let $Lg_{i}(P, T)$ be the position set including all occurrence positions of $P$ in $g_{i}$. Note that $FM(P, T) = Lg_{0}(P, T)$. Obviously, $Lg_{i}(P, T)$ can be obtained as:

\begin{equation}
Lg_{i}(P, T)=\{ x| x=y+i,y \in FM(*^{i}P, T)\}
\end{equation}
Since $L(P, T)$ consists of all occurrence positions of $P$ in $D$ sets $(g_{0},$ $g_{1}, \ldots, g_{(D-1)})$, $L(P, T)$ is:
\begin{equation}
L(P, T)=\bigcup_{i = 0}^{D - 1}Lg_{i}(P, T)
\end{equation}$\verb+                                               +$
\end{proof}

\vspace{1\baselineskip}
\noindent\textbf{Basic algorithm of FMtree.} According to Theorem 1, we propose the core algorithm of FMtree. To utilize FMtree, FM-indexes must be sampled by value sampling strategy with regular sampling distance $D$.
When locating a pattern $P$ via FM-indexes, there are total $D$ steps in FMtree.
Specifically, in $i$-th step $(i = 0, 1,\ldots, D - 1)$, this algorithm consists of the following three stages:

\begin{itemize}
\item FMtree first searches $*^iP$ via FM-indexes to obtain their corresponding $SA$ ranges. In total, there are |$\Sigma$|$^i$ $SA$ ranges in $i$-th step, since $*^iP$ represents |$\Sigma$|$^i$ different strings.
For each string $S[0, |S| - 1]$ in $i$-th step $(i = 1,..., D - 1)$, its $SA$ range $[sp, ep]$ is updated from the $SA$ range $[sp_{1}, ep_{1}]$ of string $S[1, |S| - 1]$ in $(i - 1)$-th step by two rank operations to $BWT(T)$.
More precisely, $sp = C[S[0]] + rank_{S[0]}(BWT(T), sp_{1})$, and $ep = C[S[0]] + rank_{S[0]}(BWT(T), ep_{1}+1)-1$, as shown in Algorithm 1.
Note that the $SA$ range of $P$ in 0-th step has been calculated in advance exploiting the backward search algorithm. 


\item Then FMtree needs to retrieve the sampled positions in these $SA$ ranges.
As shown in Section 2.2, FM-index saves all sampled positions of $SA$ in $SSA$ in suffix array order, and utilizes bitmap $B$ such that $B[i] = 1$ denotes that $SA[i]$ is saved in $SSA$. Therefore, given a $SA$ range $[sp, ep]$, all sampled positions in $SA[sp, ep]$ are saved consecutively in $SSA[ssp, sep]$, where $ssp = rank_{1}(B, sp)$ and $sep = rank_{1}(B, ep)$.

\item Finally, once all sampled positions of $*^iP$ have been obtained, FMtree adds $i$ to these positions to obtain the occurrence positions of $P$.

\end{itemize}

With the above three stages, in $i$-th step, FMtree is able to obtain all positions in $Lg_{i}(P, T)$.
Thus, $L(P, T)$ can be obtained in total $D$ steps of FMtree.
For genomic data with alphabet size $|\Sigma|$ = 4, the search space of FMtree is actually a quadtree of height $D$, as shown in Fig.~\ref{fig2}. 
Indeed, the $i$-th step of FMtree corresponds to the $i$-th layer of this quadtree. We observe that FMtree locates all occurrence positions of $P$ block-by-block, while existing locating algorithm has to locate these positions one-by-one.

\begin{figure}[t]
\centering
\includegraphics[width=0.5\textwidth]{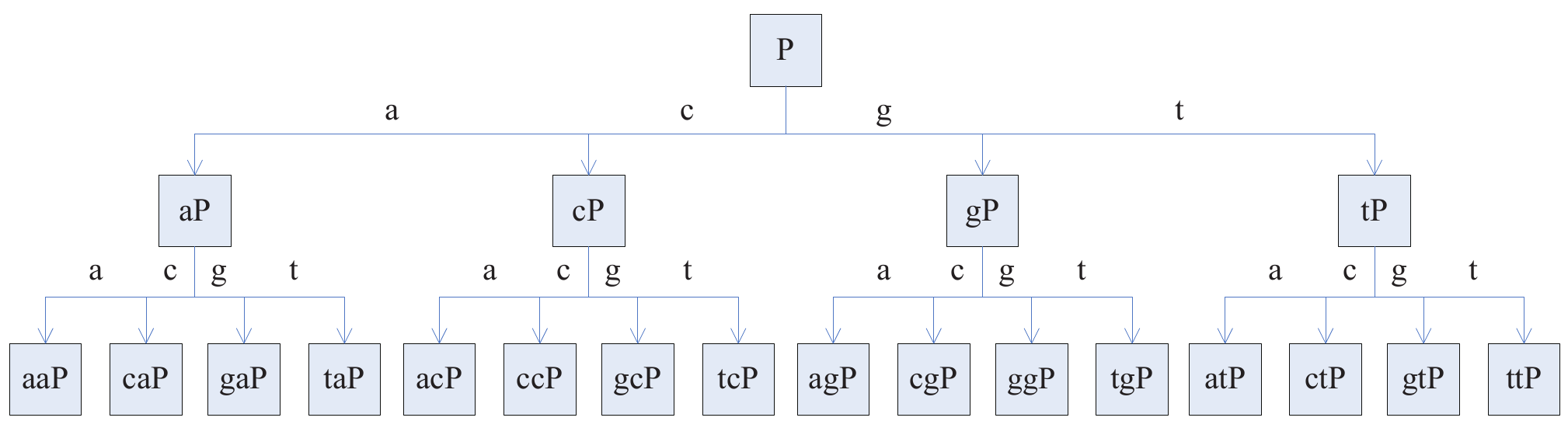}
\caption{An example illustrates the search space of the basic FMtree when the sampling distance
$D = 3$ and the pattern is $P$. }
\label{fig2}
\end{figure}

\vspace{1\baselineskip}
\noindent\textbf{Further optimizations of FMtree.}  The key problem of FMtree is that with the increasing value of sampling distance $D$, the number of rank operations increases exponentially.
In $i$-th step, the number of rank operations to $BWT(T)$ is $2 \times |\Sigma|^i = 2 \times 4^i$.
In addition, an equal number of rank operations to $B$ are also required to determine the $SSA$ ranges.
As such there are many rank operations when $D$ is large.

To solve this problem, we first propose an optimization to FMtree, called \emph{early leaf nodes calculation}.
For the basic FMtree, we observe that the cost of calculating its leaf nodes (i.e., the $(D-1)$-th step of FMtree) dominates its overall locating time. The reason is that in $(D-1)$-th step, the number of rank operations to both $BWT(T)$ and $B$ is 2 $\times 4^{D-1}$, which is larger than the total number of rank operations in the rest $D-1$ steps of FMtree. Early leaf nodes calculation is proposed to avoid the expensive $(D-1)$-th step of FMtree. It is based on the following theorem:

        \begin{theorem}
    Given a text $T[0, |T|-1]$ and a pattern $P[0, |P|-1]$, the position set $L(P, T)$, which includes all occurrence positions of $P$ in $T$, can be calculated as follows:


    \begin{equation}
    \begin{split}
    L(P, T)=\bigcup_{i = 1}^{D} \{ x| x=y-i, y \in FM(P[i, |P|-1], T), y \ge i, \\ P[0, i-1] = T[y-i,y-1]\}
    \end{split}
    \end{equation}

    \end{theorem}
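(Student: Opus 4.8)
The plan is to mirror the argument used for Theorem~1: partition $L(P,T)$ according to the residue modulo $D$ of each occurrence position, but now index the classes by $i\in\{1,\dots,D\}$ instead of $\{0,\dots,D-1\}$, and rewrite each class as a search for the suffix $P[i,|P|-1]$ among the sampled positions together with an explicit check of the discarded prefix $P[0,i-1]$ against $T$. Throughout I would assume $D\le |P|$, which always holds in practice, so that these prefixes and suffixes are well defined (the boundary case $i=|P|$, where the suffix is empty, is harmless).

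First I would set up the partition. For $i\in\{1,\dots,D\}$ put $g'_i=\{x\in L(P,T): x+i\equiv 0\ (\mathrm{mod}\ D)\}$. Since $i\mapsto(-i\bmod D)$ is a bijection from $\{1,\dots,D\}$ onto $\{0,\dots,D-1\}$, the sets $g'_1,\dots,g'_D$ are pairwise disjoint and $\bigcup_{i=1}^{D} g'_i = L(P,T)$; this already shows that each occurrence position is reported exactly once by the right-hand side, so only set equality with each summand needs to be checked.

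The core step is to prove, for each fixed $i$, that $g'_i=\{\,y-i: y\in FM(P[i,|P|-1],T),\ y\ge i,\ P[0,i-1]=T[y-i,y-1]\,\}$. For the inclusion $\subseteq$: if $x\in g'_i$ then $T[x,x+|P|-1]=P$, so setting $y=x+i$ gives $y\ge i$, $T[y,y+|P|-i-1]=P[i,|P|-1]$ (hence $y$ is an occurrence of the suffix) and $T[y-i,y-1]=T[x,x+i-1]=P[0,i-1]$; moreover $y\equiv 0\ (\mathrm{mod}\ D)$, so under value sampling $y$ is a sampled position and therefore $y\in FM(P[i,|P|-1],T)$. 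For the reverse inclusion $\supseteq$: given such a $y$ and $x=y-i$, concatenating $T[x,x+i-1]=T[y-i,y-1]=P[0,i-1]$ with $T[x+i,x+|P|-1]=T[y,y+|P|-i-1]=P[i,|P|-1]$ yields $T[x,x+|P|-1]=P$, so $x\in L(P,T)$, and $x+i=y\equiv 0\ (\mathrm{mod}\ D)$, so $x\in g'_i$.

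Finally I would take the union of this identity over $i=1,\dots,D$ and invoke $\bigcup_i g'_i=L(P,T)$ to recover the statement. I expect the only delicate part to be the index bookkeeping --- keeping the ranges of $P[0,i-1]$ and $P[i,|P|-1]$ aligned, and handling the $i=|P|$ boundary --- rather than anything conceptual, since value sampling makes the equivalence ``$y$ is sampled iff $y\equiv 0\ (\mathrm{mod}\ D)$'' immediate and reduces the whole argument to splitting $P$ at position $i$ and verifying the dropped prefix directly in $T$.
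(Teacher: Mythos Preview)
Your proposal is correct and follows essentially the same approach as the paper: partition $L(P,T)$ by residue class modulo $D$ (your $g'_i$ is exactly the paper's $Lg_{(D-i)}(P,T)$, since $x+i\equiv 0\pmod D$ iff $x\equiv D-i\pmod D$), identify each class with the $i$th summand, and take the union. In fact you are more careful than the paper, which simply asserts the analogue of your per-class identity (its equation~(6)) without spelling out either inclusion; your explicit verification of both directions and your remark on the $D\le |P|$ hypothesis only make the argument cleaner.
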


    \begin{proof}
    Since $FM(P[i, |P|-1], T)=Lg_{0}(P[i, |P|-1], T)$, for each position $SA[j]$ in $FM(P[i, |P|-1], T)$, $(SA[j]-i)$ mod $D = D-i$. Then $Lg_{(D-i)}(P[0, |P|-1], T)$ can be obtained as:
\begin{equation}
    \begin{split}
    Lg_{(D-i)}(P, T)= \{ x| x=y-i, y \in FM(P[i, |P|-1], T),\\ y \ge i,  P[0, i-1] = T[y-i,y-1]\}
    \end{split}
    \end{equation}

    Therefore, $L(P, T)$ is:
    \begin{equation}
    L(P, T)=\bigcup_{i = 1}^{D}Lg_{(D-i)}(P, T)
    \end{equation}$\verb+                                               +$
    \end{proof}

Actually, the aim of the $(D-1)$-th step in basic FMtree is to obtain $Lg_{(D-1)}(P, T)$. According to Theorem 2, this position set can be obtained using early leaf nodes calculation in following three stages:

\begin{itemize}
\item Early leaf nodes calculation first searches $P[1, |P|-1]$ via FM-indexes to obtain its corresponding $SA$ range $[sp_1, ep_1]$.


\item Then for every $SA[j]$ ($sp_1 \le j \le ep_1$) if $B[j]=1$ and $T[SA[j]-1]=P[0]$, early leaf nodes calculation adds it to a position set $R$. Note that since $T[SA[j]-1]= BWT(T)[j]$, early leaf nodes calculation actually checks all elements in $B[sp_1, ep_1]$ and $BWT(T)[sp_1, ep_1]$. Thus, the memory accesses in this stage are highly local. In contrast, the $(D-1)$-th step of basic FMtree results in many random memory accesses. A more practical and efficient implementation about this stage can be found in Supplementary Section S1.

\item Finally, early leaf nodes calculation subtracts 1 from the positions in $R$ which has been obtained in second stage.

\end{itemize}

Apart from early leaf nodes calculation, two simple branch-cut strategies are proposed to further improve the performance of FMtree (from 0-th step to $(D-2)$-th step). First, for an interval $SA[sp_i, ep_i]$ in $i$-th step ($0 \le i \le D-2$), if $ep_i - sp_i + 1$ is smaller than a predefined threshold, FMtree calculates all positions in $SA[sp_i, ep_i]$ one-by-one in at most $D-i-2$ steps of LF operation.
Thus, the number of LF operations for $SA[sp_i, ep_i]$ in following $D - i - 2$ steps is $(ep_i - sp_i + 1) \times (D-i-2)$ in worst case. Note that the rank operation to $BWT(T)$ is the dominant cost of LF operation. When $D$ is large, if FMtree does not adopt this branch-cut strategy, the number of rank operations to $BWT(T)$ in following $D - i - 2$ steps is $2 \times (4^1+4^2+\ldots+4^{D-i-2}) = 8 \times (4^{D-i-2}-1)/3$, which is much larger than $(ep_i - sp_i + 1) \times (D-i-2)$. Similarly, the number of rank operations to $B$ can also be reduced.
Second, when locating a pattern with $occ$ occurrence positions, FMtree terminates once $occ$ occurrence positions have been obtained.


\begin{algorithm} [t]
        \caption{FMtree($P, sp, ep, sp_1, ep_1, D$)}
        \begin{algorithmic}[1]
            \REQUIRE the pattern $P[0, |P|-1]$; the $SA$ range $[sp, ep]$ of $P[0, |P|-1]$;
            the $SA$ range $[sp_1, ep_1]$ of $P[1, |P|-1]$; the sampling distance $D$
            \ENSURE the position set $R$ consists of all positions in $SA[sp, ep]$
            \STATE $total\_num=ep-sp+1$, $num=0$
            \STATE perform early leaf nodes calculation exploiting $[sp_1, ep_1]$ and $P[0]$; add the obtained $m$ positions to $R$; $num=num+m$
            \STATE $tree\_height=D-1$
            \STATE $node.sp=sp, node.ep=ep, node.layer=0$
            \STATE $Queue.EnQueue(node)$

            \WHILE{$Queue$ is not empty $\mathrel{\&}\mathrel{\&} num<total\_num$}
            \STATE $Queue.DeQueue(node)$
            \STATE $sp=node.sp, ep=node.ep, layer=node.layer$
            \IF{$ep-sp+1<threshold$}
            \STATE calculate positions in $SA[sp, ep]$ one-by-one in at most $tree\_height-layer-1$ steps of LF operation; add the obtained $m$ positions to $R$; $num=num+m$.
            \ELSE
            \STATE $ssp = rank_{1}(B, sp), sep = rank_{1}(B, ep)$
            \STATE $num=num+sep-ssp+1$
            \FOR{$k = ssp$ to $sep$}
			\STATE Add $SSA[k]+layer$ to $R$
            \ENDFOR

            \IF{$layer+1<tree\_height$}
            \STATE$spchild[0] = C[a] + rank_{a}(BWT(T), sp)$
            \STATE$spchild[1] = C[c] + rank_{c}(BWT(T), sp)$
            \STATE$spchild[2] = C[g] + rank_{g}(BWT(T), sp)$
            \STATE$spchild[3] = C[t] + rank_{t}(BWT(T), sp)$
            \STATE$epchild[0] = C[a] + rank_{a}(BWT(T), ep+1)-1$
            \STATE$epchild[1] = C[c] + rank_{c}(BWT(T), ep+1)-1$
            \STATE$epchild[2] = C[g] + rank_{g}(BWT(T), ep+1)-1$
            \STATE$epchild[3] = C[t] + rank_{t}(BWT(T), ep+1)-1$

            \FOR{$t = 0$ to $3$}
			\STATE $node.sp=spchild[t], node.ep=epchild[t]$
			\STATE $node.layer=layer+1$
			
			\STATE $Queue.EnQueue(node)$
            \ENDFOR
            \ENDIF

            \ENDIF
			
			\ENDWHILE

			\RETURN $R$
        \end{algorithmic}
    \end{algorithm}















\vspace{1\baselineskip}
\noindent\textbf{Full algorithm of FMtree.} Algorithm 3 presents the full algorithm of FMtree. By utilizing a queue data structure \emph{Queue}, the conceptual quadtree of FMtree is traversed in breadth-first order. In fact, the height of this quadtree is $D-1$ instead of $D$ (line 3 in Algorithm 3). This is because early leaf nodes calculation (line 2 in Algorithm 3) is used to avoid the $(D-1)$-th step in basic FMtree. For any node in quadtree, the $SA$ ranges of its four children are calculated jointly to improve the data locality (line 18-25 in Algorithm 3).
A detailed analysis is presented in Supplementary Section S2.

\end{methods}

\section{Results}

In our experiments, we used the following three datasets:

\begin{itemize}
\item Dna.200MB consists of 209.72 million characters from Pizza\&Chili corpus (\url{http://pizzachili.dcc.uchile.cl/}), which is the standard benchmark in compressed full-text indexes (\citealp{review2009}). 

\item For practical bioinformatics algorithms, their indexes must be able to process large texts with billions of characters. Thus, the human genome including 3.16 billion characters was used in our experiments.

\item Another large text is the mouse genome. It consists of 2.73 billion characters.
\end{itemize}

These datasets were used as texts in our experiments. Like popular bioinformatics algorithms (\citealp{bwa, soap2}), character $n$ in these three datasets was converted to one of $a$, $c$, $g$ and $t$ randomly.

We first compare FMtree with two state-of-the-art methods, including locally compressed suffix array (LCSA) (\citealp{lcsa1, lcsa2}) and LZ-index (\citealp{lz1}). LCSA is designed specifically to accelerate the locating operation of compressed full-text indexes, and LZ-index has been proven that it is very competitive in locating speed (\citealp{review2009}).
Besides, we implemented two FM-index-based locating algorithms: Original\_v and Original\_s. The only difference between Original\_v and Original\_s is their sampling strategies. Original\_v locates patterns via the FM-index sampled by value sampling strategy, while Original\_s locates patterns via the FM-index sampled by subscript sampling strategy (see Section 3.1). For FMtree, Original v and Original s, we implemented a highly optimized FM-index for genomic data. Its main data structures ($C$ and $BWT(T)$) are similar to those in (\citealp{soap2}), which is a well-established FM-index-based bioinformatics algorithm (see Supplementary Section S3). 
Note that FMtree, Original\_v and Original\_s are independent on any particular implementation of FM-indexes, so that we did not test these locating algorithms with other implementations of FM-indexes. Another family of compressed full-text indexes CSAs was not tested in our experiments. The reason is that for genomic data, previous studies (\citealp{review2009, Experience}) have shown that CSAs cannot outperform FM-indexes, LCSA and LZ-indexes for locating operation. For detailed description about the experimental setting, please see Supplementary Section S3.

\subsection{Comparison on Small Text}

In the first experiment, dna.200MB was used as text to evaluate the performance of different methods.
As previous studies about the locating operations (\citealp{review2009, Experience}), patterns were generated by randomly selecting 10 short substrings of length 5 from the text. By utilizing these short patterns with many occurrence positions, we could focus on the performance of locating operations and ignore the influence of counting operations. The reason is that compared with the locating time of different methods, their counting time was negligible in this experiment.
For LZ-index, a parameter $\epsilon$ trades the locating time for space usage.
With the increasing value of $\epsilon$, the space usage of LZ-index decreases, but its locating time increases. We set $\epsilon = \{1, 2 , 3, 4\}$ in this experiment.
For LCSA, its default parameters were directly used. 
For FM-indexes used in FMtree, Original\_v and Original\_s, we set the sampling distance $D = \{2, 3, 4, 5, 6, 7, 8\}$, to make the space usage of FM-indexes similar to that of LZ-index and LCSA.

As shown in Fig.~\ref{fig3}, with similar space usage, FMtree is one or two order of magnitude faster than other methods. Apart from FMtree, LCSA outperforms other methods in locating speed. However, it requires much more space than the others. 
We also observe that the space usage of Original\_s is slightly less than that of FMtree and Original\_v. Unlike Original\_s which uses subscript sampling strategy, FMtree and Original\_v adopt value sampling strategy. Thus, FMtree and Original\_v need an extra bitmap $B$ to mark all sampled positions in $SA$. In exchange, Original\_s is about 1.5 times slower than Original\_v, and at least 40 times slower than FMtree.

\begin{figure}
\centering
\includegraphics[width=0.40\textwidth]{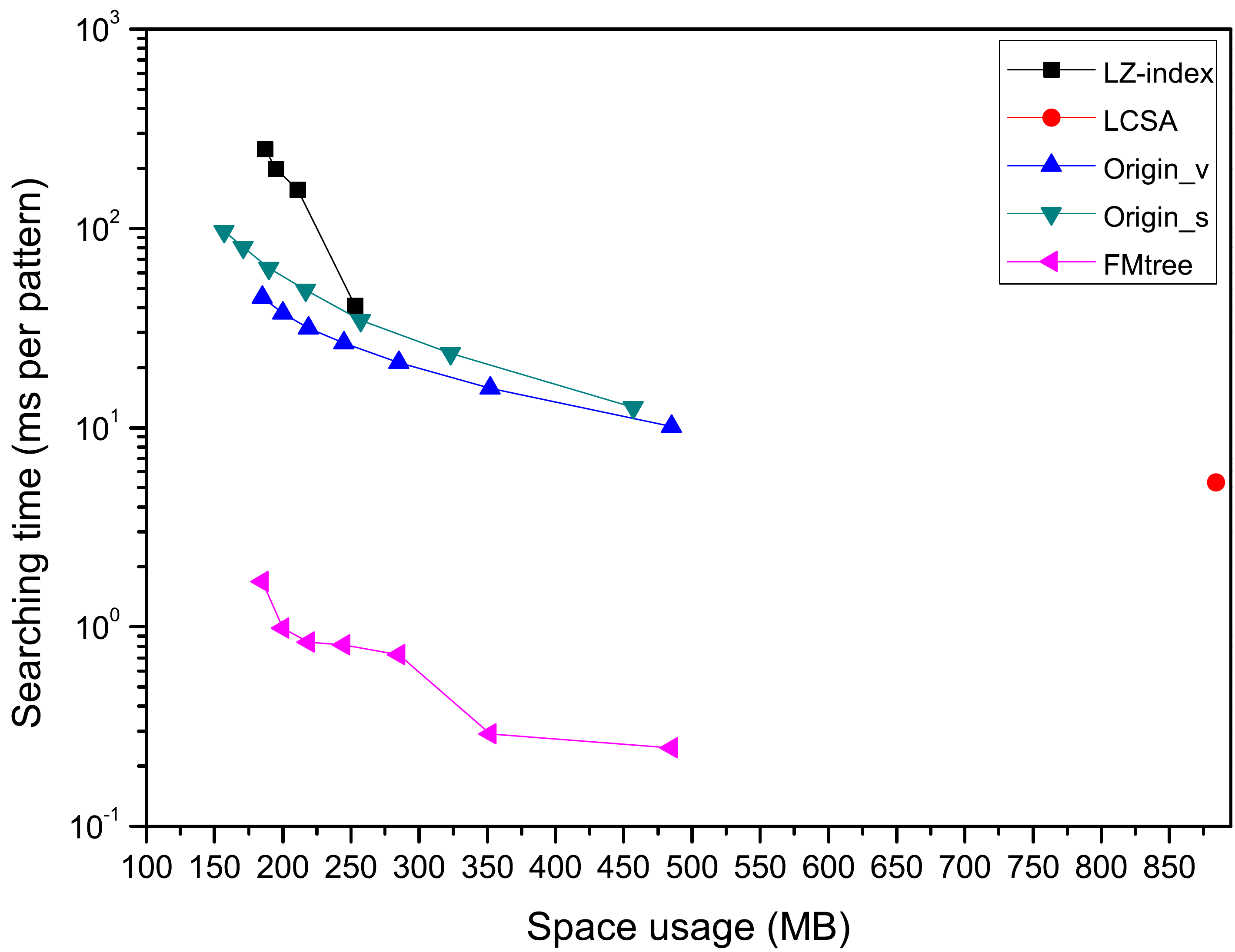}
\caption{Time/space tradeoffs of different methods for dna.200MB.}
\label{fig3}
\end{figure}

\begin{figure}[b]
\centering
\includegraphics[width=0.50\textwidth]{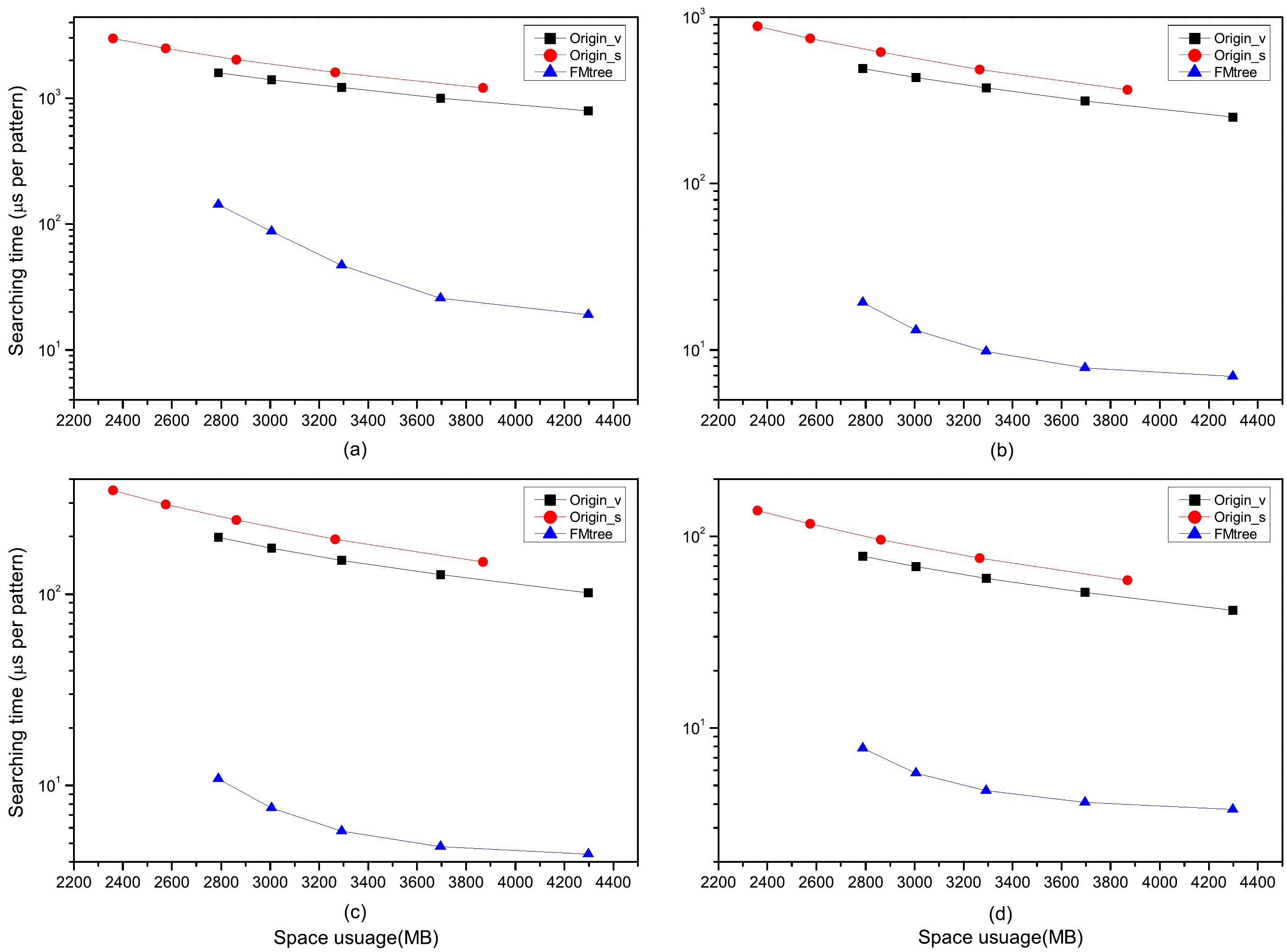}
\caption{Time/space tradeoffs of different methods for human genome. For subgraph (a), (b), (c) and (d), the length of pattern is 12, 16, 20 and 25, respectively.}
\label{fig4}
\end{figure}

\subsection{Comparison on Large and Practical Texts}

In the second experiment, we studied the performance of different methods on two large texts: human genome and mouse genome.
To generate patterns, we randomly extracted short substrings from both human genome and mouse genome.
For each text, we generated four datasets including 100k patterns of length 12, 16, 20 and 25, respectively.
In fact, short patterns of length 10 to 25 are widely used in existing bioinformatics algorithms (\citealp{mrfast, bowtie2}). Note that in this experiment, we did not tested LZ-index and LCSA due to two reasons. First, their implementations cannot process large texts like human genome and mouse genome. Second, the results in first experiment have shown that they are significantly slower than FMtree.

Fig.~\ref{fig4} and Fig.~\ref{fig5} present the results of FMtree, Original\_v and Original\_s with sampling distance $D = \{4, 5, 6, 7, 8\}$.  
With the increasing value of $D$, all methods require less space.
Note that in this experiment, the space usage of FM-index is close to that of the input human genome and mouse genome, which require 3GB RAM and 2.6GB RAM, respectively.
For human genome (see Fig.~\ref{fig4}), FMtree is up to 62 times faster than Original\_v and Original\_s. And for mouse genome (see Fig.~\ref{fig5}), FMtree is at most 86 times faster than other methods.
We also present the locating time of different methods in Table S1 and Table S2. With respect to Original\_v and Original\_s, FMtree achieves highest gain when we focus on the locating time instead of the overall searching time. This is because in addition to the locating operation, the searching algorithm of FM-index also consists of the counting operation.

\begin{figure}[t]
\centering
\includegraphics[width=0.50\textwidth]{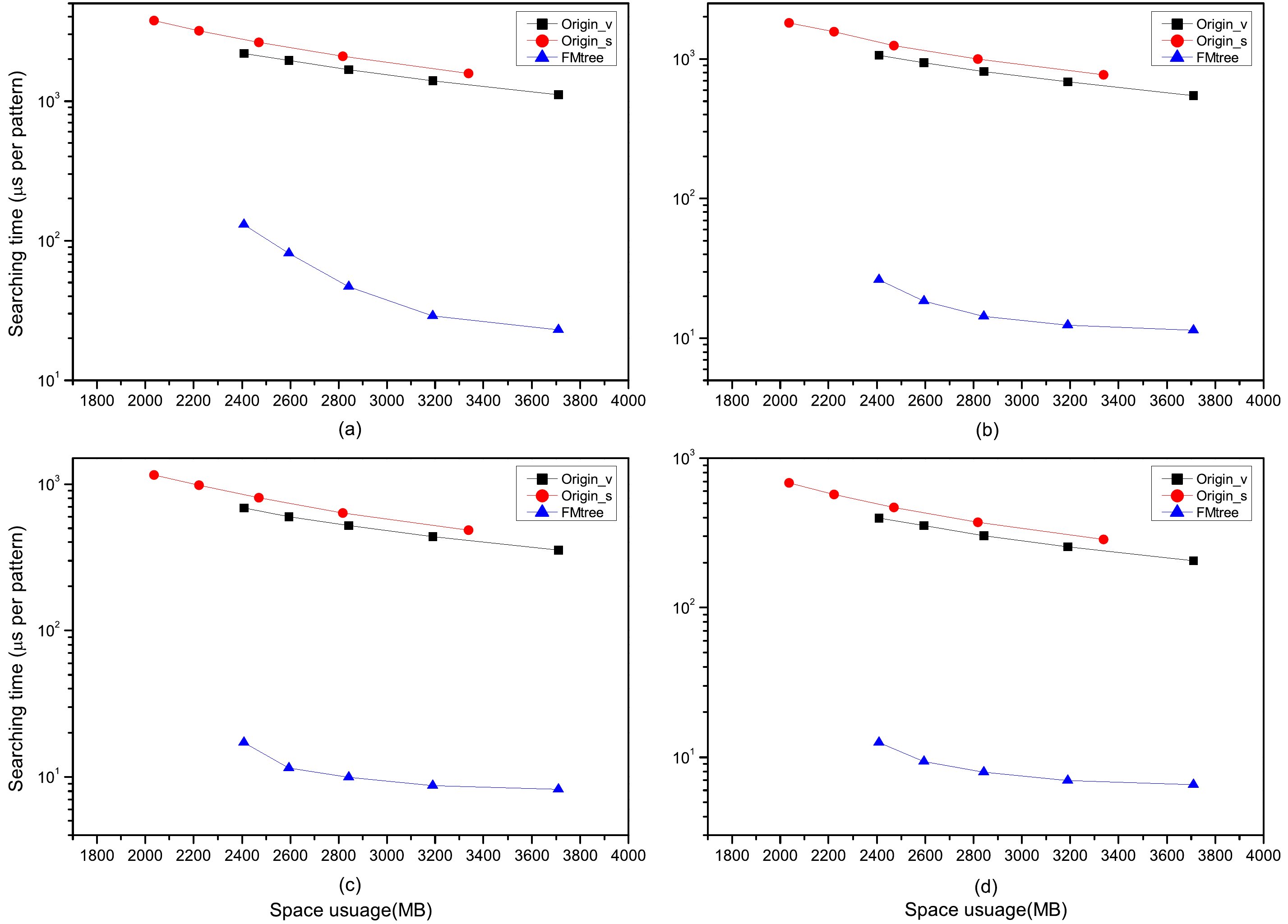}
\caption{Time/space tradeoffs of different methods for mouse genome. For subgraph (a), (b), (c) and (d), the length of pattern is 12, 16, 20 and 25, respectively.}
\label{fig5}
\end{figure}

\section{Conclusion and future work}

In this paper we propose a novel algorithm, FMtree, to accelerate the locating operations of FM-indexes for genomic data. When searching for a pattern via FM-indexes, FMtree builds a conceptual quadtree, so that multiple occurrence positions of the pattern can be obtained simultaneously by traversing this quadtree. In contrast, existing locating algorithm has to calculate all occurrence positions one-by-one. Therefore, FMtree reduces massive unnecessary operations and presents better data locality. We also introduce several strategies to further speed up FMtree.

For genomic data with small alphabet size, FMtree is significantly faster than state-of-the-art methods in our experiments. However, the performance of FMtree decreases rapidly with the increasing size of alphabet. In the future, it would be interesting to optimize FMtree for the applications with large alphabet size.

\section*{Acknowledgements}

This work was partially supported by the National Nature Science Foundation of China under the grant No. 61672480
and the Program for Excellent Graduate Students in Collaborative Innovation Center of High Performance
Computing.

%
%

\end{document}